\documentclass[12pt,a4paper]{article}
\usepackage[centertags]{amsmath}
\usepackage{amsfonts,amsthm,amssymb}
\usepackage{amssymb}
\usepackage{amsmath}
\usepackage{graphicx}
\usepackage{enumerate}
\usepackage[square]{natbib}

\setcounter{MaxMatrixCols}{10}

\linespread{1.3}\vfuzz2pt \hfuzz2pt

\theoremstyle{definition}

\newtheorem{corollary}{Corollary}

\newtheorem{proposition}{Proposition}

\newtheorem{axiom}{Axiom}
\newtheorem*{property}{Property}

\begin{document}

\title{Axiomatic Approach to Solutions of Games}
\author{Yakov Babichenko\footnote{Center for the Mathematics of Information, Department of Computing and Mathematical Sciences, California Institute of Technology. e-mail:babich@caltech.edu.}}

\maketitle

\begin{abstract}
We consider solutions of normal form games that are invariant under strategic equivalence. We consider additional properties that can be expected (or be desired) from a solution of a game, and we observe the following:
\begin{itemize}
\item Even the weakest notion of \emph{individual rationality} restricts the set of solutions to be equilibria. This observation holds for \emph{all} types of solutions: in pure-strategies, in mixed strategies, and in correlated strategies where the corresponding notions of equilibria are pure-Nash, Nash and coarse-correlated.
\end{itemize}
An action profile is \emph{(strict) simultaneous maximizer} if it simultaneously globally (strictly) maximizes the payoffs of all players. 

\begin{itemize}
\item If we require that a simultaneous maximizer (if it exists) will be a solution, then the solution contains the set of pure Nash equilibria.

\item There is no solution for which a strict simultaneous maximizer (if it exists) is the unique solution.
\end{itemize}
\end{abstract}

\section{Introduction}

A \emph{solution} of a strategic interaction suggests a subset of behaviors as possible behaviors in the interaction. The central solutions of strategic interactions are \emph{Nash equilibria and correlated equilibria} which are defined to be the set of all behaviors where no player can gain by a unilateral deviation. Equilibrium behaviors have well known and well studied undesirable properties: non-uniqueness, inefficiency, the underlying assumption that each player knows opponents' behavior, and others... This motivates the search for a "better" solution which overcomes part of the undesirable properties. This issue was addressed in areas as equilibrium selection (see \citep{hs}) and learning dynamics (see \citep{hm}). Alternative solutions was suggested: risk-dominant equilibrium (see \citep{hs}), scholastically stable equilibrium (see \citep{y}),  minimal curb (see \citep{bw}), sink equilibrium (see \citep{gmv}) and many others.


This note suggests reasonable basic properties of a solution (axioms). We show that these basic properties characterize the classical notions of Nash equilibrium and correlated equilibrium; In other words, these equilibria are the only possible solutions that satisfies these basic properties. 

The axioms that are considered in this note are the following:

\begin{itemize}
\item Strategic equivalence (SE), which requires that the set of solutions of two strategically equivalent games will be identical.

\item Individual rationality (IR). We consider the weakest notion of individual rationality that requires that in a solution, every player receives a payoff that is above the payoff that he can guarantee by pure actions.

\item Simultaneous maximizer (SM), which requires that if there exists an action profile where all players receive their best possible payoff, then this action profile will be a possible solution.

\item Unique Simultaneous maximizer (USM), which requires that if there exists an action profile where all players receive strictly their best possible payoff, then this action profile will be a unique solution.
\end{itemize}

Our main observations are the following:

\begin{enumerate}
\item Every solution that satisfies (SE) and (IR) is a subset of pure Nash/ Nash/ coarse-correlated equilibria, depending on the type of allowed behaviors, pure/ mixed/ correlated strategies correspondingly (Corollary \ref{cor:sub} and Proposition \ref{theo:e-c}). This observation demonstrates how unreasonable could a non-equilibrium solution be (from the perspective of at least one player) if the solution is invariant under strategic equivalence.

\item Every solution that satisfies (SE) and (SM) contains the set of pure Nash equilibia (Propositions \ref{theo:pne} and \ref{theo:pne-c}). A consequence of this observation is that any equilibrium selection that satisfies strategic equivalence\footnote{Indeed most of studied models of of equilibrium selections satisfy strategic equivalence. For instance risk-dominance or stochastic stability.} will be inefficient in cases where efficiency seems reasonable (i.e., cases of existence of simultaneous maximazer).

\item By combining observations (1) and (2) we obtain that pure Nash equilibrium is the unique solution that satisfies axioms (SE), (IR), and (SM) (see Corollaries \ref{cor:ch-pne} and \ref{cor:ch-pne-c}), which provides a simple axiomatization for pure Nash equilibrium. 

\item There is no solution that satisfies (SE) and (USM) (Corollaries \ref{cor:ch-pne} and \ref{cor:ch-pne-c}). This observation demonstrates the difficulty in selecting the efficient equilibrium, even in cases where this selection seems reasonable (i.e., cases of existence of strict simultaneous maximazer). This difficulty was demonstrated  also in \citep{as}, where they essentially shows that such selection can be done in two player games, but the selecting mechanism is involved.

\end{enumerate}


Axiomatic approach to Nash equilibrium was addressed previously in Norde et al. \citep{nprv}, where they also obtain an axiomatization for pure Nash equilibrium (as we do in Corollaries \ref{cor:ch-pne} and \ref{cor:ch-pne-c}). The central axiom in \citep{nprv} is \emph{consistency} that was presented in \citep{pt} and \citep{ppt}. Consistency requires that if a subset of players $P$ act according to a solution $s$, then in the game between the players that are not in $P$ acting according to $s$ remains a solution. The nature of this axiom is to maintain the \emph{payoffs} in the game and to require invariance of the solution under a proper change of the \emph{structure of the game}. Note also that the consistency axiom, by its nature, cannot be defined for correlated behaviors. In the present note the central axiom is strategic equivalence which has the opposite nature: we maintain the \emph{structure of the game} and require invariance of the solution under a proper change of the \emph{payoffs}. Our axioms are well defined for correlated strategies and indeed Propositions \ref{theo:e} and \ref{pro:cor} proves axiomatic characterization for correlated and coarse-correlated equilibria. Axiomatic approach to equilibrium with consistency as a central axiom was discussed also in \citep{vvb} where they consider games with vector-payoffs, in \citep{hptb} where they consider Bayesian games, and in \citep{vkn} where they obtain axiomatization for minimal curbs.

\section{Preliminaries}

We use standard notation of a normal form game. The set of \emph{players} is $[n]$. The set of \emph{actions of player $i$} is denoted by $A_i$, and $A=\times_{i\in [n]} A_i$ denotes the \emph{action profiles} set. The set of probability distributions over a set $B$ is denoted by $\Delta(B)$. The \emph{payoff function} of player $i$ is denoted by $u_i:A\rightarrow \mathbb{R}$, and it is miltylinearily extended to $u_i:\Delta(A)\rightarrow \mathbb{R}$. We will identify the game with its payoff profile $u=(u_i)_{i\in [n]}$. The set of all games with action profiles set $A$ is denoted by $\Gamma(A)$.

For a distribution $x\in \Delta(A)$ we denote by $x_i \in \Delta (A_i)$ and $x_{-i}\in \Delta(A_{-i})$ the marginal distributions over the elements $A_i$ and $A_{-i}$ correspondingly. For an action $b_i\in A_i$ we denote by $x|b_i \in \Delta(A_{-i})$ the conditional distribution of $x$ over $A_{-i}$ given the event $\{a:a_i=b_i\}$. The notion $x|b_i$ is well defined only if the event $\{a:a_i=b_i\}$ has positive probability (i.e., if $x_i(b_i)>0$).

For two game $u,v$ with the same actions set the game $u+v$ is defined by $(u+v)(a)=u(a)+v(a)$.

We denote by $z(c,i,b_{-i})$ the following game,
\begin{equation}\label{eq:d}
(z(c,i,b_{-i}))_j(a)=\begin{cases}
c& \text{ if } j=i \text{ and } a_{-i}=b_{-i}\\
0& \text{ otherwise.}
\end{cases}
\end{equation}
where player $i$ receives a payoff of $c$ iff his opponents play $b_{-i}$. 

The set of possible \emph{behaviors} in a game with an actions set $A$ is denoted by $\mathcal{B}=\mathcal{B}(A)$ and it may vary upon the model we consider. We can consider pure strategies behaviors ($\mathcal{B}=A$), mixed strategies behaviors ($\mathcal{B}=\times_{i\in [n]} \Delta A_i$), or correlated strategies behaviors ($\mathcal{B}=\Delta(A)$). 

A \emph{solution} of a game is a mapping $S:\Gamma(A)\twoheadrightarrow \mathcal{B}(A)$, which maps every game into a set of possible solutions. For every solution we should specify the type of behaviors it produces (pure, mixed, or correlated). We will not specify the type of behaviors in cases where the statement holds for all tree types of behavior.


The central solutions of normal form games are: pure Nash equilibrium, Nash equilibrium, correlated equilibrium and coarse-correlated equilibrium.

Pure Nash equilibrium is a pure-strategies solution that is defined by:
\begin{equation*}
PNE(u)=\{a\in A : u_i(a)\geq u_i(b_i,a_{-i}) \text{ for all } i\in [n] \text{ and all } b_i \in A_i \}.
\end{equation*}
Nash equilibrium is a mixed-strategies solution that is defined by:
\begin{equation*}
NE(u)=\{x\in \times_{j\in [n]} \Delta (A_j) : u_i(x)\geq u_i(b_i,x_{-i}) \text{ for all } i\in [n] \text{ and all } b_i \in A_i \}.
\end{equation*}
Correlated equilibrium is a correlated-strategies solution that is defined by:
\begin{eqnarray*}
CE(u)=\{x\in \Delta (A) :& u_i(a_i,x|a_i)\geq u_i(b_i,x|a_i) \text{ for all } i\in [n], \\
&\text{ all } a_i \text{ s.t. } x_i(a_i)>0, \text{ and all } b_i \in A_i \}.
\end{eqnarray*}
Coarse-correlated equilibrium is a correlated-strategies solution that is defined by:
\begin{equation*}
CCE(u)=\{x\in \Delta (A) : u_i(x)\geq u_i(b_i,x|a_i) \text{ for all } i\in [n] \text{ and all } b_i \in A_i \}.
\end{equation*}
A farther discussion on the behavior-model difference between correlated equilibrium and coarse-correlated equilibrium appears in Section \ref{sec:cor}. 

Our central axiom for a solution is invariance under  strategic equivalence.

\begin{axiom}
\textbf{Strategic Equivalence (SE)}: $S(u)=S(u+z(c,i,b_{-i}))$ for every $u,c,i$ and $b_{-i}$.
\end{axiom}

The rational behind the axiom is as follows. In the game $u+z(c,i,b_{-i})$ whether or not player $i$ receives an additional payoff of $c$ \emph{does not} depend on the behavior of players $i$. Therefore, if $x\in S(u)$ is a reasonable behavior in the game $u$ (i.e., a solution), so it should be for the game $u+z(c,i,b_{-i})$.

\section{Finite actions sets}\label{sec:r}

\subsection{Restriction to Equilibrium Behaviors}

We define the \emph{pure individually rational} value of player $i$ to be
\begin{equation*}
pir_i (u):=\underset{a_i\in A_i}{\max} \left( \underset{a_{-i}\in A_{-i}}{\min}  u_i(a_i,a_{-i}) \right).
\end{equation*}
Note that player $i$ can \emph{guarantee} a payoff of at least $pir_i$ by playing a pure strategy \emph{irrespective} of the strategy of his opponents. Therefore it is reasonable to assume that every player will get a payoff of at least $pir_i$ in a solution of the game:

\begin{axiom}
\textbf{Pure Individual Rationality (PIR)}: $u_i(x)>pir_i (u)$ for every $x\in S(u)$ and every player $i$. 
\end{axiom}

The rational behind axiom \textbf{(PIR)} relies on a deviation argument: if player $i$ receives a payoff below $pir_i$ then he will prefer to deviate to a pure strategy that guarantees him at least $pir_i$. We emphasize that the above deviation argument is significantly weaker than the deviation argument in the definition of equilibrium, because in an equilibrium player $i$ should also know the strategy of his opponents. 

There are other notions of individual-rationality: individual rationality in mixed strategies where the opponents are not allowed to correlate their punishment strategies (i.e., $ir_i (u):=\max_{x_i\in \Delta (A_i)} \min_{x_{-i}\in \times_{j\neq i} \Delta (A_j)}  u_i(x_i,x_{-i})$), and individual-rationality where we allow to the opponents to correlate their punishment strategy (i.e., $cir_i (u):=\max_{x_i\in \Delta (A_i)} \min_{x_{-i}\in \Delta (A_{-i})}  u_i(x_i,x_{-i})$). Depending on the type of solution (in pure, mixed or correlated strategies) each one of the individually rational levels might be a reasonable axiom for the solution. However, the pure individual-rationality axiom is the \emph{weakest} axiom among the three, and we will see that even this weakest axiom is sufficient to eliminate all the non-equilibria behaviors in all three cases of behavior (pure, mixed, and correlated).

We would like to treat to all the cases of behavior in a single statement. In order to do so, we consider the property that defines the notion of equilibrium in all cases:

\begin{property}
\textbf{No Unilateral Deviation Improvement (NUDI)}: $u_i(x)\geq u_i(a_i,x_{-i})$ for every $x\in S(u)$, every $i\in [n]$, and every $a_i\in A_i$.
\end{property}

For the case of pure-strategy behaviors, \textbf{(NUDI)} defines the set of pure Nash equilibria. For the case of mixed-strategy behaviors, \textbf{(NUDI)} defines the set of mixed Nash equilibria. For the case of correlated-strategy behaviors \textbf{(NUDI)}, defines the set of coarse-correlated equilibria.

The following proposition states that the weak axiom \textbf{(PIR)} is sufficient for eliminating all the non-equilibrium behaviors.

\begin{proposition}\label{theo:e}
Every solution $S$ that satisfies axioms \textbf{(SE)} and  \textbf{(PIR)} satisfies also property \textbf{(NUDI)}.
\end{proposition}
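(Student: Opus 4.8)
The plan is to exploit \textbf{(SE)} to reduce the general inequality in \textbf{(NUDI)} to a single application of \textbf{(PIR)} in a carefully chosen strategically equivalent game. The first observation is that one game $z(c,i,b_{-i})$ shifts only player $i$'s payoff, and only when the opponents play the fixed profile $b_{-i}$, by the same amount $c$ regardless of player $i$'s own action. Consequently, by summing these games over all $b_{-i}\in A_{-i}$ and iterating \textbf{(SE)}, I can add to player $i$'s payoff an arbitrary function $g:A_{-i}\to\mathbb{R}$ of the opponents' profile alone, leaving every other player's payoff and the solution set unchanged. Concretely, $S(u)=S(u')$ where $u'=u+\sum_{b_{-i}\in A_{-i}}z(g(b_{-i}),i,b_{-i})$ satisfies $u'_i(a_i,a_{-i})=u_i(a_i,a_{-i})+g(a_{-i})$ and $u'_j=u_j$ for $j\neq i$.

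Next I would fix $x\in S(u)$, a player $i$, and a candidate deviation $a_i^*\in A_i$, with the goal $u_i(x)\ge u_i(a_i^*,x_{-i})$. The key choice is $g(a_{-i}):=-u_i(a_i^*,a_{-i})$, which makes the deviation action \emph{flat} in $u'$: one has $u'_i(a_i^*,a_{-i})=0$ for every $a_{-i}$. Since $a_i^*$ then guarantees exactly $0$ irrespective of the opponents, the pure individually rational value obeys $pir_i(u')\ge\min_{a_{-i}}u'_i(a_i^*,a_{-i})=0$.

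Now I invoke both axioms. By \textbf{(SE)} we have $x\in S(u')$, so \textbf{(PIR)} applies in $u'$ and yields $u'_i(x)>pir_i(u')\ge 0$. A one-line computation with the multilinear extension gives $u'_i(x)=u_i(x)+\sum_{a_{-i}}x_{-i}(a_{-i})g(a_{-i})=u_i(x)-u_i(a_i^*,x_{-i})$. Combining the two displays yields $u_i(x)-u_i(a_i^*,x_{-i})>0$, and since $i$ and $a_i^*$ were arbitrary this establishes \textbf{(NUDI)} (in fact with strict inequality, which a fortiori gives the stated weak inequality). I would stress that the argument uses only the multilinear extension and the marginal $x_{-i}$, both available for pure, mixed, and correlated behaviors, so the same proof covers all three models at once.

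The main obstacle is conceptual rather than computational: recognizing that \textbf{(SE)} grants precisely the freedom to shift player $i$'s payoffs by an opponent-dependent additive term, and then selecting the shift $g=-u_i(a_i^*,\cdot)$ that collapses the deviation action to a constant. Once this is seen, the guarantee value of that constant action lower-bounds $pir_i(u')$ and \textbf{(PIR)} converts directly into the no-deviation inequality; the remaining steps—realizing $g$ as a finite sum of $z$-games, iterating \textbf{(SE)}, and the linearity computation of $u'_i(x)$—are routine.
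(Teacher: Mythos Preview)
Your proposal is correct and is essentially the paper's own proof: you build the strategically equivalent game $u'=u+\sum_{b_{-i}}z(-u_i(a_i^*,b_{-i}),i,b_{-i})$, observe $u'_i(a_i^*,\cdot)\equiv 0$ so $pir_i(u')\ge 0$, and compute $u'_i(x)=u_i(x)-u_i(a_i^*,x_{-i})$, exactly as the paper does. The only cosmetic difference is that the paper phrases it contrapositively (assume \textbf{(NUDI)} fails and derive $x\notin S(u)$) whereas you argue directly, which in passing yields the strict inequality.
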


Straightforward corollary from the proposition is the following.

\begin{corollary}\label{cor:sub}
If a pure-strategies/ mixed strategies/ correlated-strategies solution $S$ satisfies axioms \textbf{(SE)} and \textbf{(PIR)}, then $S(u)\subseteq PNE(u)$/ $S(u)\subseteq NE(u)$/ $S(u)\subseteq CCE(u)$ (correspondingly).
\end{corollary}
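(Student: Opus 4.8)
The plan is to derive \textbf{(NUDI)} from \textbf{(PIR)} by using \textbf{(SE)} to move, for one fixed player and one fixed candidate deviation at a time, all of the relevant payoff information of player $i$ into his pure individually rational value. Fix a game $u$, a behavior $x\in S(u)$, a player $i$, and a target action $a_i\in A_i$; the goal is the single inequality $u_i(x)\ge u_i(a_i,x_{-i})$, since \textbf{(NUDI)} is exactly the conjunction of these over all $i$ and $a_i$.

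First I would build a strategically equivalent game in which the action $a_i$ guarantees player $i$ a payoff of $0$ against \emph{every} opponent profile. Concretely, set $c_{b_{-i}} := -u_i(a_i,b_{-i})$ for each $b_{-i}\in A_{-i}$ (a finite family, as $A$ is finite) and define
\begin{equation*}
v := u + \sum_{b_{-i}\in A_{-i}} z\bigl(c_{b_{-i}},i,b_{-i}\bigr).
\end{equation*}
Iterating \textbf{(SE)} over the summands gives $S(v)=S(u)$, so $x\in S(v)$. By the definition \eqref{eq:d}, each summand touches only player $i$'s payoff and, at a profile $(d_i,b_{-i})$, adds exactly $c_{b_{-i}}$ regardless of $d_i$; hence $v_i(d_i,b_{-i})=u_i(d_i,b_{-i})-u_i(a_i,b_{-i})$, and in particular $v_i(a_i,b_{-i})=0$ for every $b_{-i}$.

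Next I would read off the two quantities that \textbf{(PIR)} compares. Since playing $a_i$ yields $v_i(a_i,\cdot)\equiv 0$, the action $a_i$ guarantees $0$, whence $pir_i(v)=\max_{d_i}\min_{d_{-i}} v_i(d_i,d_{-i})\ge 0$. On the other hand, the summand indexed by $b_{-i}$ contributes $c_{b_{-i}}$ to every profile whose opponent part is $b_{-i}$, and such profiles carry total $x$-probability $x_{-i}(b_{-i})$; multilinearity of $v_i$ then gives $v_i(x)=u_i(x)-u_i(a_i,x_{-i})$. This identity holds verbatim for pure, mixed, and correlated $x$ (in each case the weights on player $i$'s own action sum to one), which is precisely what lets one argument cover all three behavior models at once. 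Applying \textbf{(PIR)} to player $i$ in $v$ then yields $v_i(x)\ge pir_i(v)\ge 0$, i.e. $u_i(x)-u_i(a_i,x_{-i})\ge 0$, which is the desired instance of \textbf{(NUDI)}.

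I expect the only genuine subtlety to be the pairing of $pir_i(v)\ge 0$ with the identity $v_i(x)=u_i(x)-u_i(a_i,x_{-i})$: the pure individually rational value is a \emph{worst case} (a $\min$ over opponent profiles), whereas the quantity we must bound, $u_i(a_i,x_{-i})$, is an \emph{expectation} over $x_{-i}$. The construction is engineered so that, after translating player $i$'s payoffs along the row $a_i$ down to the constant $0$, both collapse to the same baseline, converting the worst case into the relevant average. The remaining points are routine bookkeeping: confirming via \eqref{eq:d} that the added games leave the other players' payoffs untouched so that $v$ is a legitimate iterate of the \textbf{(SE)} move, and carrying out the multilinear expansion of $v_i(x)$.
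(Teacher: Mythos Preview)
Your argument is correct and is essentially the paper's own proof of Proposition~\ref{theo:e}, from which the corollary is then immediate; you build the same game $v=u+\sum_{b_{-i}} z(-u_i(a_i,b_{-i}),i,b_{-i})$, observe $pir_i(v)\ge 0$ and $v_i(x)=u_i(x)-u_i(a_i,x_{-i})$, and invoke \textbf{(PIR)} together with \textbf{(SE)}. The only cosmetic difference is that the paper argues by contrapositive (assume \textbf{(NUDI)} fails at some $i,a_i$ and deduce $x\notin S(u)$), whereas you argue directly for each fixed $i,a_i$; the computations are identical.
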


A consequence of the corollary is that any solution that for some games contains non-equilibrium behaviors, must violate either strategic equivalence or the pure individual rationality.




\begin{proof}[Proof of Proposition \ref{theo:e}] 
Let $S$ be a solution that satisfies \textbf{(SE)} and \textbf{(PIR)} and assume that a behavior $x$ does not satisfy the condition of \textbf{(NUDI)}, which means that there exists player $i$ and action $a_i$ such that $u_i(a_i,x_{-i})> u_i(x)$, we show that $x\notin S(u)$. We construct a strategically equivalent game $v$, where we set all payoffs of player $i$ when he plays $a_i$ to be 0. We define the game $v$ by
\begin{equation*}
v=u+\underset{b_{-i}\in A_{-i}}{\sum} z(-u_i(a_i,b_{-i}),i,b_{-i}).
\end{equation*}
It easy to see that indeed $v(a_i,b_{-i})=0$ for all $b_{-i}\in A_{-i}$, therefore $pir_i (v)\geq 0$. On the other hand note that 
\begin{equation*}
v(x)=u(x)-u(a_i,x_{-i})<0
\end{equation*}
because the payoff of player $i$ in the game $\sum_{b_{-i}} z(-u_i(a_i,b_{-i}),i,b_{-i})$ when his opponents are playing $x_{-i}$ is equal to $-u(a_i,x_{-i})$. Therefore by the \textbf{(PIR)} axiom we get $x\notin S(v)$ and by the \textbf{(SE)} axiom we get $S(v)= S(u)$.
\end{proof}

\subsubsection{Correlated equilibrium}\label{sec:cor}

Once we allow correlated strategies behavior we have to assume existence of a correlation device. One of the suggested models for a correlation device (and probably the most central one) assumes existence of a \emph{mediator}. The \emph{mediator} randomizes an action profile $a$ according to a correlated distribution $x\in \Delta(A)$ and then \emph{recommends} to each player $i$ his action $a_i$. There are two reasonable models of correlated behavior.

\begin{enumerate}
\item Each player $i$ does not observe the recommendation of the mediator at the moment when he decides about his act. In such a case, possible acts of player $i$ are "follow the recommendation" or "deviate to action $a_i$". This is the model that leads to the definition of coarse-correlated equilibrium.

\item Each player $i$ observes the recommendation of the mediator before he decides about his action. In such a case possible acts of the player are "play $a'_i$ when the recommendation was $a_i$". This is the model that leads to the definition of correlated equilibrium.
\end{enumerate}

In the case of the second model, axiom \textbf{(PIR)} is too weak, because it does not take into account the additional information that player $i$ has before making his decision. A reasonable analog of axiom \textbf{(PIR)} in the case of model (2) is the following.

\begin{axiom}
\textbf{Pure Individual Rationality for All Recommendations (PIRAR)}: $u_i(a_i,(x|a_i)_{-i})>pir_i (u)$ for every $i\in [n]$ and every $a_i$ such that $x_i(a_i)>0$. 
\end{axiom}

The axiom states that for every possible recommendation to player $i$, he receives an expected payoff of at least $pir_i (u)$ by following the recommendation of the mediator. Note that in model (2) player $i$ can decide to deviate to a pure action that guarantee him $pir_i$ \emph{after} he observed the recommendation $a_i$. Therefore axiom \textbf{(PIRAR)} is indeed reasonable in the model (2) of correlated behavior.

The analog of Corollary \ref{cor:sub} for the case of correlated equilibrium is the following.

\begin{proposition}\label{pro:cor}
If a correlated-strategies solution $S$ satisfies axioms \textbf{(SE)} and \textbf{(PIRAR)}, then $S(u)\subseteq CE(u)$.
\end{proposition}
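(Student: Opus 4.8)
The plan is to mirror the proof of Proposition~\ref{theo:e}, but with a construction tailored to the finer deviation structure of correlated equilibrium. Recall that $CE(u)$ requires, for every player $i$ and every recommendation $a_i$ with $x_i(a_i)>0$, that $u_i(a_i,x|a_i)\geq u_i(b_i,x|a_i)$ for all $b_i\in A_i$. So suppose $x\in\Delta(A)$ violates this: there exist a player $i$, a recommendation $a_i$ with $x_i(a_i)>0$, and a deviation $b_i$ such that $u_i(b_i,x|a_i)>u_i(a_i,x|a_i)$. I want to build a strategically equivalent game $v$ in which following the recommendation $a_i$ yields player $i$ a payoff strictly below $pir_i(v)$, so that \textbf{(PIRAR)} forces $x\notin S(v)=S(u)$.

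The key step is to choose the modification so that, conditional on recommendation $a_i$, the expected payoff of \emph{following} becomes negative while the guaranteed pure value $pir_i(v)$ is at least $0$. Here the natural move is to add $z$-terms that zero out the payoffs player $i$ receives whenever he plays the \emph{deviating} action $b_i$: set
\begin{equation*}
v=u+\sum_{c_{-i}\in A_{-i}} z\bigl(-u_i(b_i,c_{-i}),\,i,\,c_{-i}\bigr),
\end{equation*}
using the same device as in the earlier proof but centered on $b_i$ rather than on the offending recommendation. By construction $v_i(b_i,c_{-i})=0$ for every $c_{-i}$, so player~$i$ can guarantee $0$ by playing $b_i$ unconditionally, giving $pir_i(v)\geq 0$. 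The $z$-terms only affect player~$i$'s payoffs, and they shift every action row by a constant that depends only on $c_{-i}$; in particular the conditional expected payoff of \emph{any} fixed action of player~$i$ against the conditional distribution $x|a_i$ is shifted by the same amount, namely $-\,u_i(b_i,x|a_i)$. Hence $v_i(a_i,x|a_i)=u_i(a_i,x|a_i)-u_i(b_i,x|a_i)<0$ by the assumed violation, while $pir_i(v)\geq 0$. This puts the conditional payoff of following recommendation $a_i$ strictly below $pir_i(v)$, so \textbf{(PIRAR)} gives $x\notin S(v)$, and \textbf{(SE)} gives $S(v)=S(u)$, completing the argument.

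The main subtlety — and the step I would check most carefully — is that the $z$-device only permits adding, to player~$i$'s payoff, a quantity that is constant across all of his \emph{own} actions for a fixed opponent profile $b_{-i}$. Thus I cannot selectively penalize the single pair $(b_i,c_{-i})$; adding $z(-u_i(b_i,c_{-i}),i,c_{-i})$ subtracts $u_i(b_i,c_{-i})$ from \emph{every} entry $v_i(\,\cdot\,,c_{-i})$, including the row for $a_i$. The construction above is designed precisely so that this uniformity is harmless: since the same constant is subtracted from the $a_i$-row and the $b_i$-row for each $c_{-i}$, the conditional-payoff gap $u_i(a_i,x|a_i)-u_i(b_i,x|a_i)$ is preserved, and it is exactly this gap that we have assumed to be negative. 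The only thing to verify is that the guaranteed value $pir_i(v)$ is genuinely nonnegative — which holds because the $b_i$-row is now identically zero, so the pure maximin over $i$'s actions is at least the value $0$ attained at $b_i$. With these two facts in hand, the contradiction is immediate and the proof is complete.
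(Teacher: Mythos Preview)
Your proof is correct and takes essentially the same approach as the paper's own proof: the paper also constructs $v=u+\sum_{b_{-i}} z(-u_i(a'_i,b_{-i}),i,b_{-i})$ to zero out the deviating-action row (their $a'_i$ is your $b_i$), obtains $pir_i(v)\geq 0$ and $v_i(a_i,x|a_i)<0$, and concludes via \textbf{(PIRAR)} and \textbf{(SE)}. Your version is slightly more explicit in justifying why the uniform row-shift preserves the conditional-payoff gap, which is a helpful clarification.
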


The proof is similar to the proof of proposition \ref{theo:e}

\begin{proof}
Let $S$ be a solution that satisfies \textbf{(SE)} and \textbf{(PIRAR)} and assume that a distribution $x\in \Delta(A)$ is not a correlated equilibrium, which means that there exists player $i$ and actions $a_i,a'_i$ such that $u_i(a'_i,(x|a_i))> u_i(a_i,(x|a_i))$, we show that $x\notin S(u)$. We construct a strategically equivalent game $v$, where we set all payoffs of player $i$ when he plays $a'_i$ to be 0. We define the game $v$ by
\begin{equation*}
v=u+\underset{b_{-i}\in A_{-i}}{\sum} z(-u_i(a'_i,b_{-i}),i,b_{-i}).
\end{equation*}
In the constructed game $pir_i (v)\geq 0$, and on the other hand $v(a_i,(x|a_i))=u(a_i,(x|a_i))-u(a'_i,(x|a_i))<0$. Therefore by axioms \textbf{(PIRAR)} and \textbf{(SE)} we get $x\notin S(v)=S(u)$.
\end{proof}

\subsection{Containment of Equilibrium Behaviors}

Non of the axioms \textbf{(SE)} or \textbf{(PIR)} requires that part of the behaviors will belong to the set of solutions. In particular,  the unreasonable solution $S(u)\equiv \emptyset$ satisfies both axioms \textbf{(SE)} and \textbf{(PIR)}. There are games where it is not obvious which behaviors should belong to the set of solutions. But in the case where there exists an action $a$ that simultaneously maximizes the payoffs of all players it is reasonable to assume that this action will be a possible solution.

\begin{axiom}
\textbf{Simultaneous Maximizer (SM)} If there exists $a\in A$ such that $u_i(a)\geq u_i(b)$ for all $i\in [n]$ and all $b\in A$, then $a\in S(u)$.
\end{axiom}

Under the strategic equivalence assumption, this basic axiom is sufficient for including all the pure Nash equilibria as possible solutions.

\begin{proposition}\label{theo:pne}
If a solution $S$ satisfies axioms \textbf{(SE)} and \textbf{(SM)}, then $S(u)\supseteq PNE(u)$ for every $u$.  
\end{proposition}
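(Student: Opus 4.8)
The plan is to take an arbitrary pure Nash equilibrium $a^* \in PNE(u)$ and build, using only the payoff shifts $z(c,i,b_{-i})$, a strategically equivalent game $v$ in which $a^*$ becomes a simultaneous maximizer. Once that is accomplished, axiom \textbf{(SM)} yields $a^* \in S(v)$, and repeated application of axiom \textbf{(SE)} gives $S(v)=S(u)$, so $a^*\in S(u)$. As $a^*$ is arbitrary, this establishes $PNE(u)\subseteq S(u)$.

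The engine of the construction is the fact that $z(c,i,b_{-i})$ adds the constant $c$ to player $i$'s payoff whenever the opponents play $b_{-i}$, regardless of player $i$'s own action. Hence $\sum_{b_{-i}\in A_{-i}} z(g_i(b_{-i}),i,b_{-i})$ adds to player $i$ an arbitrary prescribed function $g_i:A_{-i}\to\mathbb{R}$ of the opponents' profile, and summing further over all players allows me to reach, within the strategic-equivalence class of $u$, any game of the form $v_i(a_i,a_{-i})=u_i(a_i,a_{-i})+g_i(a_{-i})$.

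The decisive choice is $g_i(b_{-i})=-\max_{c_i\in A_i} u_i(c_i,b_{-i})$, that is, subtracting from each player the value of his best response against every opponent profile:
\begin{equation*}
v = u + \sum_{i\in [n]}\sum_{b_{-i}\in A_{-i}} z\!\left(-\max_{c_i\in A_i} u_i(c_i,b_{-i}),\, i,\, b_{-i}\right).
\end{equation*}
With this choice $v_i(b_i,b_{-i}) = u_i(b_i,b_{-i}) - \max_{c_i} u_i(c_i,b_{-i}) \le 0$ for every profile, so $0$ is the global maximum of each $v_i$. The equilibrium hypothesis enters only at $a^*$: since $a^*$ is a pure Nash equilibrium, $u_i(a^*)=\max_{c_i} u_i(c_i,a^*_{-i})$, hence $v_i(a^*)=0$ for every $i$, so $a^*$ simultaneously attains the common maximum in $v$.

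I expect no serious difficulty beyond locating the correct transformation: the creative step is recognizing that the $z$-shifts let one add an arbitrary function of the opponents' actions, and then choosing $g_i$ to flatten each player's best-response value to zero. After that the verification is routine, and the only point to record carefully is that \textbf{(SM)} applies precisely because the construction was engineered to make $a^*$ a genuine simultaneous maximizer, which is exactly where being a pure Nash equilibrium (as opposed to an arbitrary profile) is used.
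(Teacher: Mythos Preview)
Your proof is correct. The construction $g_i(b_{-i})=-\max_{c_i}u_i(c_i,b_{-i})$ does exactly what you claim: it drives every payoff of player $i$ down to at most $0$, with equality precisely at best-response profiles, so any pure Nash equilibrium becomes a simultaneous maximizer in $v$. The appeal to \textbf{(SM)} and then to \textbf{(SE)} (applied finitely many times, one $z$-shift at a time) is legitimate.

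The paper's own argument is different in flavor. Rather than normalizing \emph{all} opponent profiles, it picks a bound $L$ on the payoffs of $u$ and adds the single large bonus $2L$ to player $i$ only along the slice $b_{-i}=a^*_{-i}$, via $v=u+\sum_i z(2L,i,a^*_{-i})$. The Nash property handles profiles with $b_{-i}=a^*_{-i}$, and the size of $2L$ swamps everything else. This uses only $n$ elementary shifts instead of $n\cdot\prod_{j\ne i}|A_j|$ shifts, at the cost of introducing the auxiliary bound $L$. Your ``subtract the best-response value'' normalization is in fact the very construction the paper adopts later for the continuous-action analogue (Proposition~\ref{theo:pne-c}), where it is forced to use a continuous $f_i$ and cannot rely on a single spike at $a^*_{-i}$. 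So your route is slightly less economical here but has the virtue of transferring verbatim to the continuous setting.
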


A consequence from Proposition \ref{theo:pne} is that every pure-Nash-equilibrium selection (i.e., a refinement of pure Nash equilibria) violates either strategic equivalence or the simultaneous maximizer axiom.

By combining Proposition \ref{theo:e} and \ref{theo:pne} we obtain an exact axiomatic characterization of pure Nash equilibria.

\begin{corollary}\label{cor:ch-pne}
Pure Nash equilibrium is the unique pure-strategies solution that satisfies axioms \textbf{(SE)},\textbf{(SM)}, and \textbf{(PIR)}. 
\end{corollary}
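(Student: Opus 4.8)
The plan is to prove the set equality $S(u)=PNE(u)$ for every game $u$ and every pure-strategies solution $S$ obeying the three axioms, and separately to confirm that $PNE$ itself belongs to this class; together these show that $PNE$ is the \emph{unique} pure-strategies solution satisfying \textbf{(SE)}, \textbf{(SM)}, and \textbf{(PIR)}.

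The uniqueness half is almost immediate from the results already proved. Given any pure-strategies $S$ satisfying the three axioms, I would first apply Corollary \ref{cor:sub} (which uses only \textbf{(SE)} and \textbf{(PIR)}) to obtain $S(u)\subseteq PNE(u)$, and then apply Proposition \ref{theo:pne} (which uses only \textbf{(SE)} and \textbf{(SM)}) to obtain $S(u)\supseteq PNE(u)$. Combining the two inclusions gives $S(u)=PNE(u)$ for every $u$, that is, $S=PNE$. This direction is thus a one-line sandwich argument once the earlier propositions are in hand.

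It remains to verify that $PNE$ genuinely satisfies all three axioms, so that the class is nonempty with $PNE$ as a member. For \textbf{(SE)} I would note that the perturbation $z(c,i,b_{-i})$ adds to player $i$'s payoff a quantity depending only on $a_{-i}$ and leaves every other player's payoff untouched; in both cases each unilateral-deviation difference $u_j(a)-u_j(a'_j,a_{-j})$ is unchanged, so the best-reply conditions coincide and $PNE(u)=PNE(u+z(c,i,b_{-i}))$. For \textbf{(SM)} I would observe that a simultaneous maximizer $a$ satisfies $u_i(a)\geq u_i(b)$ for all $b\in A$, in particular for every $b=(b_i,a_{-i})$, which is exactly the pure Nash condition, so $a\in PNE(u)$.

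The delicate step is \textbf{(PIR)}, and I expect it to be the main obstacle. For a pure Nash equilibrium $a$, writing $a_i^{*}$ for a maximin action of player $i$, one has
\begin{equation*}
u_i(a)\;\geq\; u_i(a_i^{*},a_{-i})\;\geq\;\min_{a'_{-i}\in A_{-i}} u_i(a_i^{*},a'_{-i})\;=\;pir_i(u),
\end{equation*}
so every Nash payoff is at least the pure individually rational level. The difficulty is that this chain delivers only a weak inequality, whereas \textbf{(PIR)} as written requires $u_i(a)>pir_i(u)$; equality can genuinely occur (for example when player $i$ has an action giving a constant payoff that happens to be a best reply on the equilibrium path). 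Resolving this would require reading \textbf{(PIR)} with a non-strict inequality, which is consistent with the way \textbf{(PIR)} is actually invoked in the proof of Proposition \ref{theo:e} (there one derives a contradiction from $v_i(x)<0\leq pir_i(v)$, for which $\geq$ already suffices). With that reading the verification closes and the characterization follows.
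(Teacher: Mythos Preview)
Your sandwich argument---combining Corollary~\ref{cor:sub} (equivalently Proposition~\ref{theo:e}) with Proposition~\ref{theo:pne}---is exactly the paper's intended proof; the paper states the corollary as an immediate consequence of those two results and gives no separate argument.

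You go beyond the paper by explicitly checking that $PNE$ itself satisfies the three axioms, and in doing so you correctly spot a genuine wrinkle the paper glosses over: with the \emph{strict} inequality in \textbf{(PIR)} as written, $PNE$ does not literally satisfy the axiom (for instance, in the all-zero game every profile is a pure Nash equilibrium with $u_i(a)=pir_i(u)=0$). Your proposed resolution---reading \textbf{(PIR)} with $\geq$---is the right one and is fully consistent with how the axiom is actually deployed in the proof of Proposition~\ref{theo:e}, where only the implication ``$v_i(x)<pir_i(v)\Rightarrow x\notin S(v)$'' is needed. This appears to be a typo in the paper's statement of the axiom rather than a defect in your reasoning.
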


\begin{proof}[Proof of Proposition \ref{theo:pne}]
Let $S$ be a solution that satisfies axioms \textbf{(SE)} and \textbf{(SM)}, let $a\in PNE(u)$, and let $L$ be a bound on the payoffs of $u$ (i.e., $|u_i(a)|<L$). We construct a strategically equivalent game $v$, where we increase by a lot the payoffs of all players at the action profile $a$. We define the game $v$ by
\begin{equation*}
v=u+\underset{i\in [n]}{\sum} z(2L,i,a_{-i}).
\end{equation*}
The action profile $a$ in the game $v$ is a simultaneous maximizer for all players, because for every $b\in A$ and every $i\in [n]$ we have
\begin{eqnarray*}
v_i(a)&=& u_i(a)+2L \geq u_i(b)+2L = v_i(b) \text{ for } b \text{ s.t. } b_{-i}=a_{-i}\\
v_i(a)&=& u_i(a)+2L \geq L \geq u_i(b) = v_i(b) \text{ for } b \text{ s.t. } b_{-i}\neq a_{-i}.
\end{eqnarray*}
By axiom \textbf{(SM)} we have $a\in S(v)$ and by axiom \textbf{(SE)} we $S(v)=S(a)$.  
\end{proof}

A desirable property of a solution is uniqueness. Obviously it is unreasonable to assume that the solution will be unique for all games (e.g., in the game where all playoffs are 0, it is reasonable to allow all behaviors). But games that contain an action profile that simultaneously \emph{strictly} maximizes the payoffs of all players it is reasonable to desire that this maximizer will be the unique solution. 

\begin{axiom}
\textbf{Unique Simultaneous Maximizer (USM)}: If there exists $a\in A$ such that $u_i(a)>u_i(b)$ for all $i\in [n]$ and all $b\neq a$, then $S(u)=\{a\}$.
\end{axiom}

Under the strategic equivalence assumption, the desirable property $\textbf{(USM)}$ is unachievable. 

\begin{proposition}\label{pro:imp}
There is no solution $S$ that satisfies axioms \textbf{(SE)} and \textbf{(USM)}.
\end{proposition}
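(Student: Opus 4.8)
The plan is to derive a contradiction by exhibiting two games that are strategically equivalent yet, by \textbf{(USM)}, are forced to have different (singleton) solutions. The mechanism behind \textbf{(SE)} is that adding a game $z(c,i,b_{-i})$ -- and hence any finite sum of such games -- alters player $i$'s payoffs by a quantity that depends only on the opponents' profile $a_{-i}$. In a two-player game this means I may add to player~$1$'s payoffs an arbitrary function of player~$2$'s action (a "column shift") and to player~$2$'s payoffs an arbitrary function of player~$1$'s action (a "row shift"), all within the strategic-equivalence class. The key observation is that such shifts preserve, for each player, the order of his own actions against a \emph{fixed} opponent action, but they can relocate the \emph{global} maximizer of the payoff matrix; this is exactly the leverage needed to move the strict simultaneous maximizer from one profile to another.

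Concretely, I would take two players with $A_1=\{T,B\}$ and $A_2=\{L,R\}$, and let $u$ be the common-interest game in which both players receive $2$ at $(T,L)$, $0$ at each off-diagonal profile, and $1$ at $(B,R)$. Then $(T,L)$ strictly maximizes both payoffs, so \textbf{(USM)} gives $S(u)=\{(T,L)\}$. Next I set $v=u+z(2,1,R)+z(2,2,B)$, which adds $2$ to player~$1$'s payoff whenever player~$2$ plays $R$ and adds $2$ to player~$2$'s payoff whenever player~$1$ plays $B$. Because this added constant exceeds the gap between the two diagonal payoffs, the profile $(B,R)$ becomes the strict simultaneous maximizer of $v$, so \textbf{(USM)} gives $S(v)=\{(B,R)\}$. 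Since $v$ is built from $u$ by two applications of the operation in \textbf{(SE)}, we must have $S(u)=S(v)$, forcing $\{(T,L)\}=\{(B,R)\}$, which is impossible.

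The only point requiring care -- and the main, though routine, obstacle -- is checking that both games genuinely possess strict simultaneous maximizers at the claimed profiles. For $u$ this is immediate from the payoff values. For $v$ one verifies, player by player, that the payoff at $(B,R)$ strictly dominates the other three entries; this reduces to the single inequality that the shift constant ($2$) is strictly larger than the diagonal gap ($2-1=1$), which holds. No deeper difficulty arises: everything reduces to comparing four numbers per player, after which the contradiction is forced by juxtaposing \textbf{(SE)} and \textbf{(USM)}. If the ambient action set $A$ has more players or more actions per player, the same witness works after embedding it into $A$ by assigning uniformly low payoffs to all remaining profiles, so that the two relevant profiles remain the respective strict simultaneous maximizers.
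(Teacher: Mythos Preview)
Your argument is correct and follows essentially the same approach as the paper: both exhibit two strategically equivalent games whose strict simultaneous maximizers lie at different profiles, forcing \textbf{(USM)} to contradict \textbf{(SE)}. The paper phrases it slightly more abstractly---starting from any game with two strict pure Nash equilibria and shifting each in turn into a strict simultaneous maximizer---whereas you give a concrete $2\times 2$ instance in which the starting game already has a strict simultaneous maximizer, so only one shift is needed; but the underlying mechanism is identical.
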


\begin{proof}[Proof of Proposition \ref{pro:imp}]
Assume by a contrary that such a solution exists. Take any game $u$ with payoffs bounded by $[-1,1]$ and with at least two strict pure Nash equilibria $a$ and $a'$. Consider the game $v=u+\sum_i z(3,i,a_{-i})$. By similar arguments to those in the proof of Proposition \ref{theo:pne} we obtain that $a$ is a unique strict maximizer. By axioms \textbf{(SE)} and \textbf{(USM)} we obtain $S(u)=S(v)=\{a\}$. By repeating the same arguments for the strict pure Nash equilibrium $a'$ we obtain $S(u)=\{a'\}$, which is a contradiction. 
\end{proof}

\section{Continuum Action Sets}

Corollary \ref{cor:ch-pne} provides an exact axiomatic characterization of pure Nash equilibria. However, for the case of finite actions sets, pure Nash equilibria is a problematic solution because it might be empty. On the other hand, for the case where the action set of each player is a compact convex set, the payoffs are continuous, and the payoff of each player is convex (with respect to his own action), pure Nash equilibrium is guaranteed to exist (see \citep{g}). In this section we show that the results of section \ref{sec:r} can be extended to the above mentioned settings.

We use the same notations as in section \ref{sec:r}. Now, $A_i$ is a compact convex set, the payoff function $u_i:A_i \times A_{-i} \rightarrow \mathbb{R}$ is continuous, and $u_i(\cdot, a_{-i})$ is a convex function of $a_i$ for every $a_{-i}$. In this section we will consider only pure-strategies behaviors (i.e., $S:\Gamma(A)\twoheadrightarrow A$).

Note that we cannot use the same axiom \textbf{(SE)} for invariance under strategic equivalence, because in the new game $u+z(c,i,b_{-i})$ the payoff of player $i$ is no longer continuous for $c\neq 0$. The analog of axiom \textbf{(SE)} for the case of continuous payoffs is defined as follows:

For a function $f:A_{-i} \rightarrow \mathbb{R}$ we denote by $z(i,f)$ the game where player's $i$ payoff is $(z(i,f))(a)=f(a_{-i})$ (irrespective of his own action), and the payoffs of all players $j\neq i$ is always $0$.

\begin{axiom}
\textbf{Continuous Strategic Equivalence (CSE)}: For every game $u$, every player $i\in [n]$, and every continuous function $f:A_{-i} \rightarrow \mathbb{R}$ holds $S(u)=S(u+z(i,f))$.
\end{axiom}

Note that the game $u+z(i,f)$ belong to the class of the discussed games: the payoffs are continuous (because $f$ is continuous, and the payoffs $(u+z(i,f))_i(\cdot,a_{-i})$ are convex because convexity is preserved under an addition of a constant.

The rational behind the axiom is exactly the same as for $\textbf{(SE)}$. The additional payoff of $f(a_{-i})$ that player $i$ receives does not depend on his own action.

The analog of Corollary \ref{cor:ch-pne} is the following:

\begin{proposition}\label{theo:e-c}
If a pure-strategy solution $S$ satisfies axioms \textbf{(CSE)} and \textbf{(PIR)}, then $S(u)\subseteq PNE(u)$.
\end{proposition}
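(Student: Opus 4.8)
The plan is to mimic the proof of Proposition \ref{theo:e}, adapting the construction from discrete games to the continuous setting where the strategic-equivalence perturbation must itself be continuous. Suppose a pure-strategy solution $S$ satisfies \textbf{(CSE)} and \textbf{(PIR)}, and take a profile $x=(x_i,x_{-i})\in A$ that violates the pure-Nash condition, i.e.\ there is a player $i$ and an action $a_i\in A_i$ with $u_i(a_i,x_{-i})>u_i(x)$. The goal is to exhibit a continuous function $f:A_{-i}\to\mathbb{R}$ such that in the game $v=u+z(i,f)$ the profile $x$ gives player $i$ a payoff strictly below $pir_i(v)$; then \textbf{(PIR)} forces $x\notin S(v)$, and \textbf{(CSE)} gives $S(v)=S(u)$, so $x\notin S(u)$.

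In the finite case one simply set player $i$'s payoff at $a_i$ to zero against every opponent profile by subtracting $u_i(a_i,b_{-i})$; here the natural analog is $f(b_{-i}):=-u_i(a_i,b_{-i})$, which is continuous since $u_i$ is continuous and $A$ is compact. With this choice, $v_i(a_i,b_{-i})=u_i(a_i,b_{-i})+f(b_{-i})=0$ for every $b_{-i}\in A_{-i}$, so playing $a_i$ guarantees player $i$ exactly $0$ against any opponent behavior, whence $pir_i(v)\geq 0$. On the other hand, evaluating at $x$,
\begin{equation*}
v_i(x)=u_i(x)+f(x_{-i})=u_i(x)-u_i(a_i,x_{-i})<0,
\end{equation*}
where I have used that $z(i,f)$ contributes $f(x_{-i})$ to player $i$ irrespective of his own action $x_i$. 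Thus $v_i(x)<0\leq pir_i(v)$, and \textbf{(PIR)} rules out $x$ from $S(v)$.

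The one genuine subtlety, and the step I expect to require the most care, is confirming that the perturbed game $v$ still lies in the admissible class for which the axioms are stated: $A_i$ must remain compact convex (unchanged), $v_i(\cdot,a_{-i})$ must remain a convex function of player $i$'s own action, and all payoffs must stay continuous. Convexity in $a_i$ is immediate because $z(i,f)$ adds to player $i$ a quantity $f(a_{-i})$ that does not depend on $a_i$, so it shifts $u_i(\cdot,a_{-i})$ by a constant and preserves convexity; continuity holds because $f$ is continuous by construction. This is exactly the observation flagged in the remark following the definition of \textbf{(CSE)}, so it goes through cleanly. Since every such $x$ violating the pure-Nash condition is excluded from $S(u)$, we conclude $S(u)\subseteq PNE(u)$.
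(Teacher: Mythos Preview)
Your proof is correct and follows essentially the same argument as the paper's own proof: both pick the continuous function $f(b_{-i})=-u_i(a_i,b_{-i})$ to zero out player $i$'s payoff along the deviating action, deduce $pir_i(v)\geq 0>v_i(x)$, and conclude via \textbf{(PIR)} and \textbf{(CSE)}. Your additional explicit check that $v$ remains in the admissible class (continuity, convexity in own action) is a welcome elaboration of what the paper leaves to the remark after the \textbf{(CSE)} axiom.
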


The proof is similar to the proof of Proposition \ref{theo:e}.
\begin{proof}
For an action profile $\overline{a}\notin PNE(u)$ there exists a player $i$ and an action $b_i$ such that $u_i(b_i,\overline{a}_{-i})>u_i(a)$. We define the function $f:A_{-i}\rightarrow \mathbb{R}$ by $f(a_{-i})=-u(b_i,a_{-i})$ (note that $f$ is continuous because $u$ is continuous), and we consider the game $v=u+z(i,f)$. In the game $v$ we have $v_i(b_i,a_{-i})=0$ for all $a_{-i}\in A_{-i}$, and therefore $pir_i(v)\geq 0$. On the other hand, $v(\overline{a})=u(\overline{a})-u(b_i,\overline{a}_{-i})$. By the \textbf{(PIR)} axiom $\overline{a}\notin S(v)$, and by the \textbf{(CSE)} axiom $S(v)=S(u)$.
\end{proof}

The analog of Proposition \ref{theo:pne} is the following:

\begin{proposition}\label{theo:pne-c}
If a pure-strategy solution $S$ satisfies axioms \textbf{(CSE)} and \textbf{(SM)}, then $S(u)\supseteq PNE(u)$.
\end{proposition}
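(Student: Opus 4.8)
The plan is to mimic the proof of Proposition \ref{theo:pne} exactly, replacing the discrete payoff-shift game $z(c,i,b_{-i})$ with its continuous analog $z(i,f)$. Fix a solution $S$ satisfying \textbf{(CSE)} and \textbf{(SM)}, and let $a\in PNE(u)$. The goal is to construct a continuous function $f_i:A_{-i}\to\mathbb{R}$ for each player $i$ so that adding $z(i,f_i)$ to $u$ turns $a$ into a global simultaneous maximizer, after which \textbf{(SM)} gives $a\in S(v)$ and \textbf{(CSE)} gives $S(v)=S(u)$, yielding $a\in S(u)$.

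First I would record that by compactness of $A$ and continuity of the $u_i$, each payoff is bounded: there is $L$ with $|u_i(b)|<L$ for all $i$ and all $b\in A$. Next I would choose, for each player $i$, a continuous bump-type function $f_i:A_{-i}\to\mathbb{R}$ that is large (say equal to $2L$) on the single point $a_{-i}$ and is $0$, or at least small, away from $a_{-i}$; concretely one can take $f_i(a_{-i}')=2L\cdot\max\{0,1-d(a_{-i}',a_{-i})/\varepsilon\}$ for a small $\varepsilon$, using any metric $d$ on the compact set $A_{-i}$. Setting $v=u+\sum_{i\in[n]} z(i,f_i)$, the added payoff to player $i$ depends only on $a_{-i}$ (not on his own action), so $v$ is still a game in the admissible class and is \textbf{(CSE)}-equivalent to $u$.

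The verification that $a$ is a simultaneous maximizer of $v$ then splits, just as in Proposition \ref{theo:pne}, into two cases. For a competing profile $b$ with $b_{-i}=a_{-i}$, player $i$'s added payoff is the same $f_i(a_{-i})$ at both $a$ and $b$, so $v_i(a)\ge v_i(b)$ reduces to the Nash inequality $u_i(a)\ge u_i(b_i,a_{-i})$, which holds because $a\in PNE(u)$. For $b$ with $b_{-i}\ne a_{-i}$, the point is that the boost $f_i(a_{-i})=2L$ at $a$ dominates: $v_i(a)=u_i(a)+2L\ge -L+2L=L\ge u_i(b)+f_i(b_{-i})=v_i(b)$, provided $f_i(b_{-i})\le 0$—so it is cleanest to take $f_i$ supported near $a_{-i}$ and nonnegative, making the inequality $v_i(a)\ge L\ge v_i(b)$ immediate whenever $f_i(b_{-i})$ is small enough (in particular one may shrink $\varepsilon$ so that $f_i(b_{-i})<L$ off a neighborhood, or simply note $u_i(b)+f_i(b_{-i})\le L+2L$ and rescale the boost to $4L$). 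Either way $a$ globally maximizes every player's $v_i$.

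The main obstacle, and the only genuine departure from the finite case, is that $z(i,f)$ requires a \emph{continuous} $f$, so I cannot use the indicator of a single profile $\{a_{-i}\}$ that the discrete construction implicitly relied on. The fix is precisely the choice of a continuous bump function above; I would make sure to verify explicitly that the bump is continuous on the compact metric space $A_{-i}$ and that choosing the boost height large relative to $L$ (e.g.\ $4L$ rather than $2L$, to absorb the residual value of $f_i$ at off-profiles) makes $a$ a global maximizer despite $f_i$ not vanishing identically away from $a_{-i}$. Once $a\in S(v)$ by \textbf{(SM)}, the \textbf{(CSE)} axiom closes the argument via $S(u)=S(v)\ni a$, and since $a\in PNE(u)$ was arbitrary we conclude $PNE(u)\subseteq S(u)$.
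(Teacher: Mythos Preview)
Your bump-function construction has a genuine gap in the case $b_{-i}\neq a_{-i}$ but $b_{-i}$ \emph{close} to $a_{-i}$. Neither of your proposed fixes (shrinking $\varepsilon$, or rescaling the peak to $4L$) works. Concretely: with peak height $M$ you have $v_i(a)=u_i(a)+M$ and $v_i(b)=u_i(b)+f_i(b_{-i})$, so you need $M-f_i(b_{-i})\ge u_i(b)-u_i(a)$. For $b_{-i}$ approaching $a_{-i}$ the left side tends to $0$, while the right side tends to $u_i(b_i,a_{-i})-u_i(a)\le 0$ by the Nash condition---but along the way the right side can be strictly positive. For instance, with $n=2$, $A_1=A_2=[0,1]$, $u_1(x,y)=x\sqrt{y}$, $u_2(x,y)=-y$, the profile $a=(0,0)$ is a pure Nash equilibrium with $u_1(a)=0$, yet $u_1(1,\delta)=\sqrt{\delta}>0$ for every $\delta>0$. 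Your linear bump gives $M-f_1(\delta)=M\delta/\varepsilon$, and $\sqrt{\delta}>M\delta/\varepsilon$ for all sufficiently small $\delta$, regardless of how you choose $M$ and $\varepsilon$. So $(1,\delta)$ beats $a$ in $v$ for small $\delta$, and $a$ is not a simultaneous maximizer. Rescaling to $4L$ fails for the same reason: once you raise the peak, $f_i(b_{-i})$ near the peak rises with it, and the gap $M-f_i(b_{-i})$ is still governed by the bump's modulus of continuity, not by $M$.

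The paper avoids this entirely by choosing $f_i$ adapted to the game rather than to the geometry of $A_{-i}$: it sets $f_i(a'_{-i})=-\max_{a_i\in A_i}u_i(a_i,a'_{-i})$, which is continuous by Berge's theorem. Then every payoff in $v=u+\sum_i z(i,f_i)$ satisfies $v_i(b)=u_i(b)-\max_{a_i}u_i(a_i,b_{-i})\le 0$, with equality at any pure Nash equilibrium, so $a$ is automatically a simultaneous maximizer. The key difference is that this $f_i$ exactly cancels the best-response value at \emph{every} opponent profile, which is precisely what a localized bump cannot do.
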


Here we cannot follow the ideas of the proof of Proposition \ref{theo:pne} directly, because the \textbf{(CSE)} axiom requires that the strategic equivalence transformation will be continuous.
\begin{proof}
Let $\overline{a}\in PNE(u)$. For every player $i\in [n]$ we define a function $f_i:A_{-i} \rightarrow \mathbb{R}$ by $f_i(a_{-i})=-\max_{a_i \in A_i} u(a_i,a_{-i})$, note that $f_i$ is continuous because the maximum of continuous functions is continuous. Now we consider the game $v=u + \sum_{i\in [n]} z(i,f_i)$. In the game $v$ all the payoffs are weakly negative because $v_i(b)=u_i(b)-\max_{a_i \in A_i} u(a_i,b_{-i})\leq 0$. Moreover, the payoffs at the action profile $\overline{a}$ are equal to $0$ because $v_i(\overline{a})=u_i(\overline{a})-\max_{a_i \in A_i} u(a_i,\overline{a}_{-i})= 0$. Therefore $\overline{a}$ is simultaneous maximizer, so $\overline{a}\in S(v)=S(u)$.
\end{proof}

By combining Propositions \ref{theo:e-c} and \ref{theo:pne-c} we obtain an exact characterization of pure Nash equilibria in continuous convex games (where pure Nash equilibrium is guaranteed to exist):

\begin{corollary}\label{cor:ch-pne-c}
Pure Nash equilibria is the unique solution that satisfies axioms \textbf{(CSE)},\textbf{(PIR)}, and \textbf{(SM)}.
\end{corollary}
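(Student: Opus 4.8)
The plan is to prove the characterization in two directions: first, that any solution $S$ obeying all three axioms must coincide with $PNE$; and second, that $PNE$ itself satisfies the three axioms, so that the characterized object actually exists.

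For the first direction I would simply invoke the two propositions already established in this section. A solution $S$ satisfying \textbf{(CSE)} and \textbf{(PIR)} satisfies $S(u)\subseteq PNE(u)$ by Proposition \ref{theo:e-c}, while a solution satisfying \textbf{(CSE)} and \textbf{(SM)} satisfies $S(u)\supseteq PNE(u)$ by Proposition \ref{theo:pne-c}. Hence any $S$ satisfying all of \textbf{(CSE)}, \textbf{(PIR)}, and \textbf{(SM)} obeys both inclusions simultaneously, so $S(u)=PNE(u)$ for every game $u$ in the class.

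For the second direction I would verify each axiom directly against the definition of $PNE$. For \textbf{(CSE)}: adding $z(i,f)$ shifts player $i$'s payoff by $f(a_{-i})$, which is independent of $a_i$, so player $i$'s best-response set at each fixed $a_{-i}$ is unchanged, and the payoffs of every other player are untouched; thus the best-response conditions defining $PNE$ are preserved and $PNE(u)=PNE(u+z(i,f))$. For \textbf{(SM)}: if $a$ globally maximizes each $u_i$, then in particular $u_i(a)\geq u_i(b_i,a_{-i})$ for all $b_i\in A_i$, so $a\in PNE(u)$. For \textbf{(PIR)}: for $a\in PNE(u)$ and any player $i$, choosing $a_i^{\ast}$ that attains $pir_i(u)$ gives $pir_i(u)=\min_{a_{-i}}u_i(a_i^{\ast},a_{-i})\leq u_i(a_i^{\ast},a_{-i})\leq u_i(a)$, where the final inequality is the equilibrium condition at $a$; hence the individually rational guarantee holds at every equilibrium.

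The step I expect to require the most care is the \textbf{(PIR)} verification, because the axiom is stated with a strict inequality $u_i(x)>pir_i(u)$, whereas the estimate above only yields the weak inequality $u_i(a)\geq pir_i(u)$, and equality can genuinely occur (for instance in a constant game, where every profile is a pure Nash equilibrium and attains $pir_i$). I would therefore read \textbf{(PIR)} with the weak inequality, under which the argument above is complete and the characterization holds verbatim; note that the proofs of Propositions \ref{theo:e-c} and \ref{theo:pne-c} go through unchanged with the weak reading, since there the offending behavior is shown to obtain payoff \emph{strictly} below the pure individually rational level. The remaining two axioms are immediate from the definition of $PNE$ and the invariance of best responses under adding an action-independent payoff term.
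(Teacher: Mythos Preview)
Your first direction---combining Propositions \ref{theo:e-c} and \ref{theo:pne-c} to get both inclusions---is exactly how the paper obtains the corollary; the paper in fact says nothing more than that.

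Where you go beyond the paper is in the second direction, explicitly checking that $PNE$ itself satisfies the three axioms, and this is where your careful observation about \textbf{(PIR)} pays off. You are right that the axiom, as written with a strict inequality, is \emph{not} satisfied by $PNE$: in a constant game every profile is a pure Nash equilibrium yet achieves exactly $pir_i$. Worse, under the strict reading no solution can satisfy all three axioms simultaneously, since in a constant game \textbf{(SM)} forces every profile into $S(u)$ while strict \textbf{(PIR)} forces $S(u)=\emptyset$. So the corollary is literally false unless \textbf{(PIR)} is read with the weak inequality. Your remark that the proofs of Propositions \ref{theo:e-c} and \ref{theo:pne-c} remain valid under the weak reading (because the excluded behavior is shown to lie \emph{strictly} below $pir_i$) is the right way to repair this, and with that reading your argument is complete and correct. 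The paper glosses over this point entirely.
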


The impossibility result of Proposition \ref{pro:imp} also can be extended to the case of continuous games:

\begin{proposition}\label{pro:imp-c}
There is no solution that satisfies axioms \textbf{(CSE)} and \textbf{(USM)}.
\end{proposition}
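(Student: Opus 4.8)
The plan is to mimic the proof of Proposition \ref{pro:imp}: exhibit a single admissible game $u$ containing two distinct strict pure Nash equilibria $a$ and $a'$, and show that a \emph{continuous} strategic-equivalence transformation can turn \emph{either} of them into a strict simultaneous maximizer. If $S$ satisfies \textbf{(CSE)} and \textbf{(USM)}, this forces $S(u)=\{a\}$ and $S(u)=\{a'\}$ simultaneously, a contradiction.

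First I would check that such a game exists in the continuous convex setting. The bilinear coordination game on $A_1=A_2=[-1,1]$ with $u_1(a)=u_2(a)=a_1a_2$ works: each $u_i(\cdot,a_{-i})$ is linear, hence convex, and both $(1,1)$ and $(-1,-1)$ are strict pure Nash equilibria (the unique best response to $a_2>0$ is the extreme point $a_1=1$, and strictly so, and symmetrically). This supplies $a=(1,1)$ and $a'=(-1,-1)$.

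The core step is the construction of the boosting transformation, and here I cannot copy the proof of Proposition \ref{theo:pne} because \textbf{(CSE)} only permits continuous additive terms. I would start from the transformation used in Proposition \ref{theo:pne-c}, namely $g_i(a_{-i})=-\max_{c_i\in A_i}u_i(c_i,a_{-i})$, which is continuous and makes the target profile a \emph{weak} simultaneous maximizer of value $0$. To upgrade weak to strict I would add a continuous tie-breaking penalty: fixing the target $a$, let $p_i:A_{-i}\to\mathbb{R}$ be any continuous function with $p_i(a_{-i})=0$ and $p_i(b_{-i})>0$ for $b_{-i}\neq a_{-i}$ (the distance to $a_{-i}$ serves, as each $A_{-i}$ is a compact metric space). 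Setting $f_i=g_i-p_i$ and $v=u+\sum_{i}z(i,f_i)$, each $v_i$ is continuous and convex in player $i$'s own action, so $v$ is admissible and the transformation is legal under \textbf{(CSE)}.

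It then remains to verify that $a$ is a strict simultaneous maximizer of $v$, which is where strictness must be argued carefully. One computes $v_i(b)=\big(u_i(b)-\max_{c_i}u_i(c_i,b_{-i})\big)-p_i(b_{-i})\le 0$, with $v_i(a)=0$. Equality $v_i(b)=0$ forces both $p_i(b_{-i})=0$, i.e.\ $b_{-i}=a_{-i}$, and $b_i$ to be a best response to $a_{-i}$; since $a$ is a \emph{strict} equilibrium the latter forces $b_i=a_i$, so $b=a$. Hence $v_i(a)=0>v_i(b)$ for every $i$ and every $b\neq a$, so $a$ is the strict simultaneous maximizer demanded by \textbf{(USM)}. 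By \textbf{(USM)} and \textbf{(CSE)} we get $S(u)=S(v)=\{a\}$, and the identical argument with $a'$ as target (keeping $g_i$, replacing $p_i$ by a penalty vanishing at $a'_{-i}$) yields $S(u)=\{a'\}$, contradicting $a\neq a'$. The main obstacle is precisely this strictness upgrade: the constant boost of Proposition \ref{theo:pne} is discontinuous, while the continuous boost of Proposition \ref{theo:pne-c} yields only a weak maximizer, so the crux is the continuous penalty $p_i$ that preserves continuity and convexity while singling out $a$ uniquely.
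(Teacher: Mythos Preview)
Your proposal is correct and follows essentially the same route as the paper: apply the continuous ``max'' transformation of Proposition~\ref{theo:pne-c} to make the target equilibrium a weak simultaneous maximizer, then add a continuous distance-type penalty (the paper uses $-\|a_{-i}-\overline{a}_{-i}\|_2$, you use a generic $p_i$) to upgrade weak to strict, invoking strictness of the equilibrium to handle the case $b_{-i}=a_{-i}$. The only cosmetic differences are that you supply an explicit base game and fold the two transformations into a single $f_i=g_i-p_i$, whereas the paper applies them in two steps.
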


\begin{proof}
Similar to the proof of Proposition \ref{pro:imp} we start with a continuous game $u$ that contain at least two strict pure Nash equilibria $\overline{a}$ and $a'$, and we construct a strategically equivalent game where $\overline{a}$ is a strict simultaneous maximizer. In order to do so we use the constructed game $v$ in the proof of Proposition \ref{theo:pne-c}. The action profile $\overline{a}$ is a \emph{weak} simultaneous maximizer in the game $v$, but we easily can define a new strategically equivalent game $v'$ where $\overline{a}$ will be a \emph{strict} simultaneous maximizer: we define $g_i: A_{-i} \rightarrow \mathbb{R}$ to be any continuous function with unique global maximum at $\overline{a}_{-i}$ that is equal to $0$ (for instance, $g_i(a_{-i})=-||a_{-i}-\overline{a}_{-i}||_2$), and we define $v'=v+\sum_i z(i,g_i)$. Note that in the game $v'$ we have $v'_i(\overline{a})=0$ for all the players. In addition, for $a$ such that $a_{-i}=\overline{a}_{-i}$ we have $v'_i(a)<v'_i(\overline{a})$ because $\overline{a}$ is a strict pure Nash equilibrium, and for $a$ such that $a_{-i}\neq \overline{a}_{-i}$ we have $v'_i(a)<v_i(a)\leq 0$. Therefore for every $a\neq \overline{a}$ and every player $i$ we have $v'_i(a)<0$. Hence, $\overline{a}$ is indeed a strict simultaneous maxinizer. By  axioms \textbf{(USM)} and \textbf{(CSE)} we have $\{\overline{a}\}=S(v')=S(v)=S(u)$, but we can repeat the same arguments for the strict pure Nash equilibrium $a'$ and obtain $\{a'\}=S(u)$, which is a contradiction.

\end{proof}

\bibliographystyle{plainnat}
\bibliography{bib}

\end{document}